\documentclass[a4paper,12pt]{article}


\usepackage{amsmath, wrapfig}
\usepackage{dsfont, a4wide, amsthm, amssymb, amsfonts, graphicx}
\usepackage{fancyhdr, xspace, psfrag, setspace, supertabular, color, enumerate}
\usepackage{hyperref}

\setcounter{MaxMatrixCols}{20}

\newtheorem{theorem}{Theorem}[section]
\newtheorem{proposition}[theorem]{Proposition}
\newtheorem{lemma}[theorem]{Lemma}

\newtheorem{corollary}[theorem]{Corollary}

\newtheorem{definition}[theorem]{Definition}

\newtheorem{example}[theorem]{Example}

\def\a{\alpha}

\def\g{\gamma}

\def\e{\epsilon}
\def\t{\theta}

\def\F{\mathbb{F}}

\def\P{\mathbb{P}}

\def\H{\mathbb{H}}

\title{On entropy Marton-type inequalities and small symmetric differences with cosets of abelian groups}
\author{Thomas Karam\footnote{Mathematical Institute, University of Oxford. Email: \texttt{thomas.karam@maths.ox.ac.uk}.}}

\begin{document}
\maketitle

\begin{abstract}

We recognise that an entropy inequality akin to the main intermediate goal of recent works (Gowers, Green, Manners, Tao \cite{GGMT1}, \cite{GGMT2}) regarding a conjecture of Marton provides a black box from which we can also through a short deduction recover another description: if a finite subset $A$ of an abelian group $G$ is such that the distribution of the sums $a+b$ with $(a,b) \in A \times A$ is only slightly more spread out than the uniform distribution on $A$, then $A$ has small symmetric difference with some finite coset of $G$. The resulting bounds are necessarily sharp up to a logarithmic factor.

\end{abstract}

\tableofcontents

\section{Distributional small doubling}

A central theme in contemporary combinatorics is that of results of the following kind. Given an abelian group $(G,+)$ and a finite subset $A$ of $G$ such that $A+A$ is not too large compared to $A$, what can be said about the structure of $A$ ?

In the integers, it is well-known that the smallest possible size of $A+A$ is equal to $2|A|-1$, which is attained when $A$ is an arithmetic progression. Conversely, Freiman’s theorem states that if for some $K \ge 1$ we have $|A+A| \le K|A|$, then $A$ is contained in an arithmetic progression with dimension at most $C_1(K)$ and length at most $C_2(K)$ for some integers $C_1(K)$,$C_2(K)$.

Still intuitively more convenient perhaps than the case of the integers is the case of finite abelian groups, where the smallest possible value of $|A+A|$ is equal to $|A|$, with equality being attained if and only if $A$ is a coset of $G$. Analogues of Freiman’s theorem have been studied in various settings, but even the simplest cases, such as that of $\F_2^n$, are interesting in their own right: as discussed by Sanders \cite{Sanders F_2^n} this case has applications to coding theory, and very recently this case was used in a breakthrough \cite{GGMT1} of Gowers, Green, Manners and Tao as a first case in which they were able to prove in a first form a conjecture of Marton also known as the polynomial Freiman-Ruzsa conjecture, using ideas that are simpler to explain (and hence more amenable to discovery as well) before moving to arguments building on these ideas in a more complicated way in the more general case of groups with bounded torsion \cite{GGMT2}. The authors of \cite{GGMT1}, \cite{GGMT2} proved in particular the following.

\begin{theorem} [\cite{GGMT2}, part of Theorem 1.1] \label{Marton’s conjecture} Let $m \ge 2$ be an integer, let $K \ge 1$, and let $G$ be an abelian group such that $mx=0$ for every $x \in G$. If $A$ is a non-empty subset of $G$ such that $|A+A| \le K|A|$, then there exists a subgroup $H$ of $G$ with size at most $|A|$ such that $A$ is contained in a union of at most $(2K)^{O(m^3)}$ cosets of $H$. \end{theorem}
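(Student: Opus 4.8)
\emph{Proof proposal.} The plan is to derive this combinatorial statement from its entropic analogue, which is the real substance of \cite{GGMT1} and \cite{GGMT2} and which I shall use as a black box. Write $\H(\cdot)$ for Shannon entropy in the natural logarithm; for a finite non-empty $S\subseteq G$ let $U_S$ be a variable uniform on $S$, variables with different labels being taken independent; and recall the entropic Ruzsa distance $d(X;Y)=\H(X'-Y')-\tfrac12\H(X)-\tfrac12\H(Y)$ for independent copies $X',Y'$ of $X,Y$. The black box is the statement: if $mG=0$, then for every finite-entropy $G$-valued $X$ there is a subgroup $H\le G$ with $d(X;U_H)\le Cm^3\,d(X;X)$ for an absolute constant $C$. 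To bring it in, take $X=U_A$: Ruzsa's triangle inequality gives $|A-A|\le|A+A|^2/|A|\le K^2|A|$, whence $d(U_A;U_A)=\H(U_{A,1}-U_{A,2})-\H(U_A)\le\log|A-A|-\log|A|\le 2\log K$, so the black box returns a subgroup $H$ with $d(U_A;U_H)\le r$, where $r:=2Cm^3\log K=O(m^3\log K)$.

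The crux is to read the coset structure off the bound $d(U_A;U_H)\le r$. The point is that $z\mapsto\mathbb{P}(U_A-U_H=z)=|A\cap(z+H)|/(|A||H|)$ is constant on cosets of $H$, so $U_A-U_H$ is uniform on whichever coset of $H$ it lands in; with $\pi:G\to G/H$ the quotient map this yields $\H(U_A-U_H)=\H(\pi(U_A))+\log|H|$, hence the identity $\H(\pi(U_A))=d(U_A;U_H)+\tfrac12(\log|A|-\log|H|)\le r+\tfrac12(\log|A|-\log|H|)$. Since $\H(\pi(U_A))\ge0$ this forces $|H|\le e^{2r}|A|$; and since a distribution whose largest point mass is $p$ has entropy at least $\log(1/p)$ while $\pi(U_A)$ has largest point mass $\mu:=\max_x|A\cap(x+H)|/|A|$, it forces $\mu\ge e^{-r}(|H|/|A|)^{1/2}$. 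Combined with $|A|\le e^{2r}|H|$ (obtained symmetrically from $\H(U_A-U_H)\ge\H(U_A-U_H\mid U_H)=\H(U_A)=\log|A|$), this produces a single coset $x+H$ with $|A\cap(x+H)|\ge e^{-2r}|A|$.

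Then I would finish combinatorially. Set $B=(A-x)\cap H$, so that $B\subseteq H$, $B+x\subseteq A$, and $|B|\ge e^{-2r}|A|$; then $|A+B|\le|A+A|\le K|A|\le Ke^{2r}|B|$, and Ruzsa's covering lemma supplies a set $X_0$ with $|X_0|\le Ke^{2r}$ and $A\subseteq X_0+B-B\subseteq X_0+H$, so $A$ meets at most $Ke^{2r}=(2K)^{O(m^3)}$ cosets of $H$, while $|H|\le e^{2r}|A|$. To make the size bound exact, if $|H|>|A|$ I would choose, along a maximal chain of subgroups of $H$ with successive prime indices (necessarily dividing $m$), a subgroup $H'\le H$ with $|A|/m<|H'|\le|A|$; each coset of $H$ then splits into $|H|/|H'|<me^{2r}$ cosets of $H'$, so $A$ is covered by fewer than $mKe^{4r}=(2K)^{O(m^3)}$ cosets of $H'$, with $|H'|\le|A|$, as required.

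The genuine obstacle is the black box itself: establishing $d(X;U_H)=O(m^3\,d(X;X))$ is precisely the hard theorem of \cite{GGMT1}, \cite{GGMT2}, obtained by analysing a $\tau$-type functional together with fibring and entropy-submodularity inequalities, the factor $m^3$ arising from a delicate treatment of $m$-fold sums that uses $mG=0$; reproducing it lies outside the scope of the present note, whose purpose is precisely to show that, once this single inequality is granted, coset-structure conclusions such as this one follow by short entropy manipulations — within which the only non-routine ingredient is Ruzsa's covering lemma.
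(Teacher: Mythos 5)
This theorem is not proved in the paper at all: it is quoted verbatim from \cite{GGMT2} as background, and the paper only uses its entropic counterpart (Proposition \ref{Entropic Marton conjecture}, resp.\ Proposition \ref{Entropic Marton conjecture for small values}) as a black box for a different purpose. So there is no ``paper proof'' to compare against; what you have written is a reconstruction of the standard deduction, essentially the one carried out by Gowers, Green, Manners and Tao themselves, of the combinatorial statement from the entropic one, and it checks out. Your chain of estimates is sound: $d(U_A,U_A)\le 2\log K$ via Ruzsa's triangle inequality (with the sum-based distance used in this paper you would even get $\log K$ directly, and since $U_H$ is symmetric the sum/difference discrepancy is immaterial); the identity $\H(U_A-U_H)=\H(\pi(U_A))+\log|H|$, which is the same factorisation through the quotient $G/H$ that drives Lemma \ref{bound on A Delta H} here; the two-sided comparison of $|A|$ and $|H|$ from nonnegativity of the two halves of $d$; the max-point-mass bound $\H\ge\log(1/\mu)$ giving a coset containing $e^{-2r}|A|$ elements of $A$; and Ruzsa covering plus a composition-series refinement (prime indices divide $m$ by Cauchy, since $G$ has exponent dividing $m$) to force $|H'|\le|A|$ while only multiplying the number of cosets by at most $me^{2r}$. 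The only substantive content you do not supply is, as you say, the entropic inequality itself, which is exactly the hard theorem of \cite{GGMT1}, \cite{GGMT2}; granting it as a black box is consistent with how this paper treats it. Two cosmetic points: the statement implicitly requires $A$ finite (otherwise $U_A$ and $|A+A|\le K|A|$ make no sense), and your final bookkeeping $mKe^{4r}\le(2K)^{O(m^3)}$ deserves the one-line remark $m\le 2^{m^3}$, but neither affects correctness.
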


In a different direction, the ratio $|A+A|/|A|$ in the general setting of non-necessarily abelian groups had ultimately led to structure results in the theory of approximate groups by Breuillard, Green and Tao \cite{Breuillard Green Tao}, where a core result, Theorem 1.6, in part says informally that if $|A+A| \le K|A|$ then $A$ can be covered by at most some $C(K)$ left-translates of a finite-by-nilpotent subgroup of $G$.

The reader may have noticed by now that in all of the previously discussed statements, the conclusion on the description of $A$ always provides a structure inside which $A$ is contained, rather than a structure which has a small symmetric difference with $A$. This is not surprising if, given a finite subgroup $H$ with a fixed and substantial size, we consider the effect of taking small perturbations of $H$ that either add to or remove from $H$ a very small number of elements, as there is a major qualitative difference between how fast the ratio $|A+A|/|A|$ moves away from $1$ in both cases. 

\begin{example} \label{Asymmetry for almost subgroups from the inside and the outside} If $A$ is contained in $H$, then that guarantees that $A+A$ is contained in $H+H=H$, and the ratio $|A+A|/|A|$ is hence at most \[|H|/|A| = 1 + |A \Delta H|/|H| + o(|A \Delta H|/|H|) \] when $|A \Delta H|$ is small compared to $|H|$. On the other hand if $A$ contains $H$ and we have the disjoint union \[A = H \cup \{x_1, \dots, x_k\} \] for some $x_1, \dots, x_k \in G \setminus H$ that furthermore each belong to different cosets of $H$, then $A+A$ contains in particular the disjoint union of cosets \[H \cup (H+\{x_1\}) \cup \dots \cup (H+\{x_k\})\] which has size $|H| (1+|A \Delta H|)$. \end{example}

More generally, if $|A \Delta H|$ is small compared to $|H|$ then the size of $A+A$ depends much more strongly on the elements of $A \setminus H$ than on the elements of $H \setminus A$. This asymmetry, between finite subsets $A$ of a group $G$ that are in a sense almost all of a finite subgroup from the inside and almost all of a finite subgroup from the outside, disappears to a large extent if instead of $|A+A|$ we consider the distribution of $a+b$ with $a,b$ chosen independently at random according to the uniform distribution on $A$. Indeed, if we take $A$ to be $H \cup \{x_1, \dots, x_k\}$ for some finite subgroup $H$ of $G$ and some $x_1, \dots, x_k \in G \setminus H$, then only a proportion at most \[\frac{2k|H|+k^2}{(|H|+k)^2} \le \frac{2k}{|H|+k}\] of the pairs $(a,b) \in A^2$ are outside $H^2$, and hence so is the proportion of pairs $(a,b) \in A^2$ such that $a+b \notin H$.

We may hence aspire to prove results of the following type: if $A$ is a finite subset of an abelian group $G$ such that the distribution of $a+b$ with $(a,b) \in A^2$ chosen uniformly at random is in some sense close to the uniform distribution on $A$, then there exists a finite subgroup $H$ of $G$ such that the symmetric difference $A \Delta H$ has a size which is small compared to that of $A$.

A reformulation of the inequality $|A+A| \le K|A|$ is that there exists a subset $U$ of $G$ with size $|A|$ such that \[|(A+A) \setminus U| \le (K-1)|A|,\] and this reformulation immediately suggests an analogous distributional definition.

\begin{definition}\label{e-approximate group}

Let $\e>0$, let $G$ be an abelian group, and let $A$ be a non-empty finite subset of $G$. We say that $A$ is a \textit{distributional $\e$-approximate group} of $G$ if there exists a subset $U$ of $G$ with size $|A|$ such that the proportion of pairs $(a,b) \in A \times A$ with $a+b \in U$ is at least $1-\e$.

\end{definition}

Contrary to what happens for the sizes of sumsets, the effect of adding or deleting a few elements of a finite subgroup of substantial size on the smallest $\e$ such that the resulting set is a distributional $\e$-approximate group is qualitatively similar.

\begin{example} \label{Estimate on symmetric differences}
Let $G$ be an abelian group, let $H$ be a finite subgroup of $G$, and let $T \subset G$. As $|T|/|H|$ tends to $0$ the following holds. \begin{enumerate}[(i)]
\item If $T \subset H$ then the smallest $\e$ such that $H \setminus T$ is a distributional $\e$-approximate group is \[|T|/|H|+o(|T|/|H|).\]
\item If $T$ is disjoint from $H$ then the smallest $\e$ such that $H \cup T$ is a distributional $\e$-approximate group is \[2|T|/|H|+o(|T|/|H|).\]
\end{enumerate}
\end{example}

\begin{proof} Item (i) follows from the fact that for every $z \in H$ the number of pairs $(x,y) \in (H \setminus T)^2$ with $x+y=z$ is between $|H|-2|T|$ and $|H|$, and item (ii) follows from the fact that for every $z \in G \setminus H$ the number of pairs $(x,y) \in (H \cup T)^2$ with $x+y=z$ is at most $2|T|$. \end{proof}

In general, we will show that if a finite subset $A$ of an abelian group $G$ is a distributional $\e$-approximate group for some small $\e$, then $A$ has small symmetric difference with some finite coset of $G$. More formally, we have the following bound, which Example \ref{Estimate on symmetric differences} shows is sharp in the dependence on $\e$ up to the log factor. (We have not attempted to optimise the constant 240.)

\begin{theorem} \label{Main theorem}

There exists $\t>0$ such that if $\e \in (0,\exp(-2))$, $G$ is an abelian group, and $A \subset G$ is a non-empty finite distributional $\e$-approximate group, then \[\frac{|A \Delta (H + \{x\})|}{|H|} \le (240 \log(\e^{-1}|A|)) \e\] for some finite coset $H +\{x\}$ of $G$, provided that the right-hand side is at most $\t$.

\end{theorem}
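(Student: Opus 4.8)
The plan is to route everything through entropy, with the entropy Marton-type inequality serving as the single substantial black box. Throughout, let $X,X'$ be independent and uniform on $A$; let $u_A$ and $u_H$ denote the uniform distributions on $A$ and on a finite subgroup $H$; write $\H(\cdot)$ for Shannon entropy; and let $d(\cdot\,;\cdot)$ be the entropic Ruzsa distance, so $d(u_A;u_A)=\H(X-X')-\log|A|$. The argument has three stages: (a) convert the distributional-approximate-group hypothesis into a bound $d(u_A;u_A)\le\eta$ with $\eta=O(\e\log(\e^{-1}|A|))$; (b) feed this into the black box to obtain a finite subgroup $H$ with $|H|$ close to $|A|$ and $d(u_A;u_H)$ small; (c) upgrade ``$d(u_A;u_H)$ small'' to ``$u_A$ concentrated on a coset of $H$'' and read off the symmetric-difference bound.

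For stage (a): let $U$ be as in Definition~\ref{e-approximate group}. Conditioning $\H(X+X')$ on the indicator of the event $\{X+X'\in U\}$, and using $\H(X+X'\mid X+X'\in U)\le\log|U|=\log|A|$ together with the crude bound $\H(X+X'\mid X+X'\notin U)\le\log|A+A|\le 2\log|A|$, one gets $\H(X+X')\le \H_2(\e)+(1+2\e)\log|A|$, where $\H_2$ is the binary entropy function and $\H_2(\e)$ is of order $\e\log\e^{-1}$ since $\e<e^{-2}$. Hence $\H(X+X')-\log|A|=O(\e\log(\e^{-1}|A|))$; as this quantity differs from $d(u_A;u_A)$ only by an absolute multiplicative constant (use the entropic Ruzsa triangle inequality to pass between the ``$+$'' and ``$-$'' conventions), we may set $\eta:=d(u_A;u_A)=O(\e\log(\e^{-1}|A|))$. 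The logarithmic loss in the final bound enters exactly here, as the sum of the $\H_2(\e)\sim\e\log\e^{-1}$ term and the $\e\log|A|$ term coming from the $O(\e)$-proportion of ``bad'' pairs.

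For stage (b): the black box yields a finite subgroup $H\le G$ with $d(u_A;u_H)\le C\eta$ for the absolute constant $C$ of that inequality; the general estimate $d(Y;Z)\ge\tfrac12|\H(Y)-\H(Z)|$ (itself immediate from $\H(Y-Z)\ge\max(\H(Y),\H(Z))$ for independent $Y,Z$) then forces $|\log(|H|/|A|)|\le 2C\eta$, so $|H|$ and $|A|$ agree up to a factor $1+O(\eta)$. For stage (c): let $\pi\colon G\to G/H$ be the quotient map and $\bar\mu:=\pi_*u_A$. The law of $X-U_H$, with $U_H\sim u_H$ independent of $X$, is $u_A*u_H$, which is constant on each $H$-coset, so $\H(u_A*u_H)=\H(\bar\mu)+\log|H|$; substituting this into the definition of $d(u_A;u_H)$ gives $\H(\bar\mu)=d(u_A;u_H)+\tfrac12\log(|A|/|H|)\le 2C\eta$. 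Since $\H(\bar\mu)\ge\log\bigl(1/\max_{c}\bar\mu(c)\bigr)$, some coset $x+H$ satisfies $\Pr[X\in x+H]=\bar\mu(\pi(x))\ge e^{-2C\eta}\ge 1-2C\eta$, that is, $|A\setminus(x+H)|\le 2C\eta\,|A|$. Combined with $|A\cap(x+H)|\le|H|\le(1+O(\eta))|A|$, this bounds $|(x+H)\setminus A|$ by $O(\eta)|H|$ as well, so $|A\,\Delta\,(x+H)|\le O(\eta)\,|H|\le O\!\left(\e\log(\e^{-1}|A|)\right)|H|$. The smallness hypothesis that the right-hand side be at most $\t$ guarantees $\eta$ is small enough for these elementary estimates (and for any smallness requirement of the black box), and tracking the absolute constants and the $O(\cdot)$'s yields the stated factor $240$; this matches, up to the log factor, the sharpness exhibited by Example~\ref{Estimate on symmetric differences}.

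The main obstacle is entirely the input of stage (b): one needs the black box — an entropy inequality in the spirit of the main intermediate results of \cite{GGMT1,GGMT2} — in the quantitatively stable form ``$d(u_A;u_A)$ small $\implies$ $u_A$ is entropically $O(1)\cdot d(u_A;u_A)$-close to the uniform measure on a coset of a finite subgroup of $G$'', with an \emph{absolute} constant and in sufficient generality, so that the $\eta\to 0$ rigidity (not merely the bounded-$\eta$ regime) is available; isolating and applying the inequality in this form is the crux. Granted that input, stages (a) and (c) are routine, the remaining care being only to keep the product of the several crude constants below $240$.
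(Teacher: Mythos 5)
Your proposal is correct and follows essentially the same three-stage route as the paper: bound $\H(U_A+U_A)-\H(U_A)$ by $O(\e\log(\e^{-1}|A|))$ via conditioning on the event $\{X+X'\in U\}$, apply the entropic Marton-type inequality as a black box to get a finite subgroup $H$ with $d(U_A,U_H)$ small and $|H|$ close to $|A|$, and then use the factorisation of the law of $U_A+U_H$ over $H\times(G/H)$ to find a dominant coset and bound the symmetric difference. The only (harmless) deviations are the detour through the minus-sign convention in stage (a), which is unnecessary since the black box can be taken in the plus convention, and your extraction of the dominant coset from $\H(\bar\mu)\le 2C\eta$ via the bound $\H(\bar\mu)\ge-\log\max_c\bar\mu(c)$, which is a slightly cleaner (and constant-wise sharper) finish than the paper's argument counting the contributions of cosets with mass at most $1/2$.
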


Variants of Theorem \ref{Main theorem} have attracted considerable interest. To give one example, an analogue of Theorem \ref{Main theorem} where the assumption pertains to the additive energy, that is, the number of quadruples $(a,b,a',b') \in A^4$ satisfying $a+b=a’+b’$, is the starting point (Proposition 1.1) of a 2013 ECM survey \cite{Sanders} of Sanders. We have chosen to use Definition \ref{e-approximate group} instead of additive energy as it appears to us natural as a closest modification of the small doubling assumption $|A+A| \le K|A|$ which does not present the qualitative asymmetry illustrated by Example \ref{Asymmetry for almost subgroups from the inside and the outside}, but the proof structure used in this paper can also be easily adapted to obtain an analogue of Lemma \ref{bound on Delta H} and then of Theorem \ref{Main theorem} where the assumption is that the additive energy of $A$ is at least $(1-\e) |A|^3$ (with adapted bounds in the conclusion).

Our main aim will be to show how Theorem \ref{Main theorem} can be obtained by using as a black box a variant of a recent entropic result which was used as the core intermediate step in the proof of Theorem \ref{Marton’s conjecture} .

In Section \ref{Section: Shannon entropy and information-theoretic inequalities} we will discuss some information-theoretic background that allows us to state and use this black box. Then, in Section \ref{Section: Deduction of the main result from the entropic Marton conjecture} we will deduce Theorem \ref{Main theorem} from it.

\section{Information-theoretic and Marton-type inequalities} \label{Section: Shannon entropy and information-theoretic inequalities}

We begin this section by discussing a few information-theoretic facts. Throughout, we will say that the \emph{range} of a random variable $X$ is the set of $x$ such that $\P(X=x) > 0$. If $X$ is a random variable with finite range $A$, then the Shannon entropy of $X$ is defined as \[ \H(X) = - \sum_{x \in A} \P(X=x) \log \P(X=x). \] It is well-known that we have the upper bound \begin{equation} \H(X) \le \log |A|. \label{bound on entropy in terms of range} \end{equation} Furthermore, when $X$ is uniformly distributed on $A$ this inequality becomes an equality, which provides a connection between the entropy of the variable $X$ and the size of the set $A$. If $X,Y$ are random variables, then it is always the case that \begin{equation} \H((X,Y)) \le \H(X) + \H(Y) \label{subadditivity of entropy} \end{equation} with equality if and only if $X,Y$ are independent. If $X,Y$ are random variables then we define the conditional entropy \[\H(X|Y) = \H(X,Y) - \H(Y),\] which by \eqref{subadditivity of entropy} always satisfies \begin{equation} \H(X|Y) \le \H(X). \label{conditioning does not increase entropy} \end{equation} 

If $A$ is a finite set, we will throughout write $U_A$ for the random variable with probability mass $1/|A|$ on every element of $A$ and probability mass $0$ everywhere else.

The probability distribution of $a+b$ with $(a,b) \in A^2$ chosen uniformly at random may now be written as the distribution of $U_A + U_A$, with the two copies of $U_A$ implicitly taken to be independent, as we will do throughout the entire paper without recalling that.

One way of measuring how much the distribution of $U_A$ spreads out is the difference of the entropies \[\Delta \H(A):= \H(U_A + U_A) - \H(U_A).\]

The difference $\Delta \H(A)$ is always nonnegative, although that might not be immediately obvious. After all, as discussed previously it is well-known that among random variables taking values in some finite set $D$, the random variable $U_D$ \textit{maximises} the entropy, so for $\H(U_A+U_A)$ to be at least $\H(U_A)$, the fact that the range of $U_A+U_A$ is larger compared to that of $U_A$ must at least compensate the fact that $U_A+U_A$ is no longer necessarily uniformly distributed on its range. Nonetheless, the desired nonnegativity does hold, as an interpretation in terms of conditional entropy will show in the proof of the next lemma (for which it goes without saying that we do not claim any originality whatsoever). We note that the case where the ranges of $U_A+U_A$ and $U_A$ have the same size corresponds to the case where $A$ is a finite coset of $G$, and $U_A+U_A$ is hence then perfectly uniform on its range $A+A$, so we have $\H(U_A+U_A) = \H(U_A)$.

\begin{lemma} \label{independent addition increases entropy} Let $X$,$Y$ be two independent random variables with finite range. Then we have the lower bound \[\H(X+Y) \ge \max(\H(X),\H(Y)).\] \end{lemma}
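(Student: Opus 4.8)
The plan is to exploit the fact that $X+Y$ with the summands known is deterministic, which forces conditional entropy to collapse, and then use the elementary inequality \eqref{conditioning does not increase entropy} in the reverse direction. Concretely, I would first observe that $Y$ is a measurable (indeed affine) function of the pair $(X, X+Y)$, since $Y = (X+Y) - X$, so $\H(Y \mid X, X+Y) = 0$, and similarly $X$ is a function of $(Y, X+Y)$. Using the chain rule $\H(X, Y) = \H(X+Y) + \H(X, Y \mid X+Y)$ together with the fact that conditioning does not increase entropy, one gets $\H(X, Y) \le \H(X+Y) + \H(X \mid X+Y) + \H(Y \mid X+Y)$; but I would prefer the cleaner route below.

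A cleaner argument: I would compute $\H(X, X+Y)$ two ways. On one hand, since $(X, X+Y)$ and $(X, Y)$ determine each other bijectively (the map $(x, z) \mapsto (x, z-x)$ is a bijection), we have $\H(X, X+Y) = \H(X, Y) = \H(X) + \H(Y)$, the last equality by independence and the equality case of \eqref{subadditivity of entropy}. On the other hand, by the chain rule $\H(X, X+Y) = \H(X+Y) + \H(X \mid X+Y)$. Combining, $\H(X+Y) = \H(X) + \H(Y) - \H(X \mid X+Y)$. Now \eqref{conditioning does not increase entropy} gives $\H(X \mid X+Y) \le \H(X)$, hence $\H(X+Y) \ge \H(Y)$; symmetrically, writing $\H(X, X+Y)$ again but using $\H(X, X+Y) = \H(X+Y) + \H(X \mid X+Y)$ is symmetric only in the other variable, so I would instead also compute $\H(Y, X+Y) = \H(X) + \H(Y) = \H(X+Y) + \H(Y \mid X+Y)$ and apply $\H(Y \mid X+Y) \le \H(Y)$ to obtain $\H(X+Y) \ge \H(X)$. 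Taking the maximum of the two bounds yields the claim.

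I do not anticipate a genuine obstacle here: every ingredient — subadditivity of entropy with its equality case for independent variables, the chain rule defining conditional entropy, and monotonicity of entropy under conditioning — is recorded in the excerpt as \eqref{subadditivity of entropy} and \eqref{conditioning does not increase entropy}. The only point requiring a word of care is the bijectivity claim: I should note explicitly that both $X+Y$ and $X-Y$ (or $Y$) take values in a finite set, so all entropies involved are finite and the chain rule manipulations are legitimate; and that $(x, y) \mapsto (x, x+y)$ being injective on $G \times G$ means the joint distributions of $(X, Y)$ and $(X, X+Y)$ have identical entropies. With that remark in place the proof is three or four short lines.
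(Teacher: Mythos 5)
Your argument is correct and rests on exactly the same ingredients as the paper's proof: the paper simply interprets $\H(X)$ as the conditional entropy $\H(X+Y \mid Y)$ (which, like your bijection step, uses independence and the correspondence $(x,y) \mapsto (x,x+y)$) and then applies \eqref{conditioning does not increase entropy}, whereas you run the identical chain-rule computation but condition on the sum $X+Y$ rather than on a summand. This is essentially the same proof; the paper's version is just a one-line compression of your ``cleaner route''.
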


\begin{proof} We interpret $\H(X)$ as the conditional entropy $\H(X+Y|Y)$. It follows from \eqref{conditioning does not increase entropy} that \[\H(X+Y|Y) \le \H(X+Y),\] so we conclude that $\H(X)$ (and likewise $\H(Y)$) is at most $\H(X+Y)$. \end{proof}

If $X$,$Y$ are random variables with finite range taking values in an abelian group, then the \emph{entropic Rusza distance}, defined by Rusza \cite{Ruzsa}, then studied in more detail by Tao \cite{Tao} and recently again by Green, Manners and Tao \cite{GMT} is the quantity \[d(X,Y) = \H(X’+Y’) - (\H(X’)+\H(Y’))/2\] where $X’,Y’$ are independent and each distributed as $X,Y$ respectively. Lemma \ref{independent addition increases entropy} shows that $d(X,Y)$ is always nonnegative.

In general we do not have $d(X,X) = 0$, and it is also false, strictly speaking, that if $d(X,Y)$ then $X=Y$: indeed, if $X$,$Y$ are uniformly distributed on any two different cosets $H + \{x_1\}, H + \{x_2\}$ of the same finite subgroup $H$ of $G$, then $X+Y$ is uniformly distributed on $H + \{x_1+x_2\}$ and the three entropies $\H(X+Y), \H(X), \H(Y)$ are hence all equal to $\log |H|$, so we have $d(X,Y) = 0$. This, nonetheless, will not get in our way.

In \cite{GGMT1} and then \cite{GGMT2}, Gowers, Green, Manners and Tao established the following entropic version of Theorem \ref{Marton’s conjecture}.

\begin{proposition} [\cite{GGMT2}, Theorem 1.3] \label{Entropic Marton conjecture} There exists an absolute constant $C>0$ such that the following holds. If $m \ge 2$ is a positive integer, $G$ is an abelian group satisfying $mx=0$ for all $x \in G$, and $X,Y$ are random variables with finite range taking values in $G$ then there exists a finite subgroup $H$ of $G$ such that \[d(X, U_H) \le Cm^3 d(X,Y) \text{ and } d(Y, U_H) \le Cm^3 d(X,Y).\] \end{proposition}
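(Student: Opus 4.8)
My plan is to follow the entropy-method proof of \cite{GGMT1}, \cite{GGMT2} --- this statement is Theorem 1.3 of the latter --- which I do not expect to improve on. First I would record the basic properties of the entropic Ruzsa distance $d$: symmetry; the Ruzsa triangle inequality $d(X,Z) \le d(X,Y) + d(Y,Z)$, deducible from submodularity of Shannon entropy in the spirit of Lemma \ref{independent addition increases entropy}; compatibility with conditioning; and the rigidity fact that $d(X,Y)=0$ forces $X$ and $Y$ to each be uniform on a translate of one common finite subgroup (the converse being straightforward).

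Then I would run an \emph{extremal-pair / entropy-increment} argument. Fix the given $X_0, Y_0$ with $d(X_0,Y_0) = k$, fix a small parameter $\eta \asymp m^{-3}$, and on the space of pairs of finitely-supported $G$-valued random variables consider
\[
\tau[X;Y] := d(X,Y) + \eta\bigl(d(X,X_0) + d(Y,Y_0)\bigr).
\]
Let $(X,Y)$ be a minimiser (existence by a routine compactness reduction; otherwise work with a near-minimiser). Bounding $d(X_0,X_0), d(Y_0,Y_0) \le 2k$ by the triangle inequality gives $\tau[X_0;Y_0] = O(k)$, so the minimiser has $d(X,Y) = O(k)$ and $d(X,X_0), d(Y,Y_0) = O(m^3 k)$. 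I would then prove the dichotomy: unless the law of $X$, and that of $Y$, is already a translate of the uniform distribution on a common finite subgroup $H$ --- in which case $d(X_0,U_H) \le d(X_0,X) + d(X,U_H) = d(X_0,X) = O(m^3 k)$ and similarly $d(Y_0,U_H) = O(m^3 k)$, which is the conclusion --- one can build from $(X,Y)$ a pair $(X',Y')$ with $\tau[X';Y'] < \tau[X;Y]$, contradicting minimality.

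Constructing this improving pair is the crux, and the place where the torsion hypothesis enters. I would take independent copies $X_1, X_2$ of $X$ and $Y_1, Y_2$ of $Y$, form the doubled variables $X_1 + X_2$ and $Y_1 + Y_2$ conditioned on suitable linear functionals such as $X_1 + Y_1$, and invoke the \emph{fibring lemma} of \cite{GGMT1} (an exact entropy identity for linear maps), which expresses $d$ of these doubled variables as $d(X,Y)$ plus correction terms equal to conditional mutual informations. Feeding in submodularity of Shannon entropy and the associated inequalities for sums of independent copies, one shows that if no replacement strictly decreases $\tau$ then all the relevant conditional mutual informations vanish, and a 100\%-structure theorem then forces the coset structure above. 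The hypothesis $mx = 0$ is used exactly here, to control the doubling map $x \mapsto 2x$ and its fibres --- trivially when $m = 2$ (the $\F_2^n$ case of \cite{GGMT1}), and by an iteration costing a factor polynomial in $m$ in the bounded-torsion case of \cite{GGMT2}; with $\eta \asymp m^{-3}$, tracking the constants gives the bound $Cm^3 d(X,Y)$.

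The hard part will be exactly this improving step together with the torsion bookkeeping --- choosing the precise combination of independent copies and conditionings so that the fibring lemma and submodularity force a strict decrease of $\tau$ unless the coset structure is present, and checking that the losses are only polynomial, indeed cubic, in $m$. An alternative route I would consider is to deduce the statement from the combinatorial form, Theorem \ref{Marton’s conjecture}: reduce a general $X$ to a near-uniform $U_A$ by dyadic pigeonholing, pass from small entropic doubling of $A$ to large additive energy of $A$ and thence, via Balog--Szemer\'edi--Gowers, to a large subset $A' \subseteq A$ with small doubling, apply Theorem \ref{Marton’s conjecture} to $A'$, and convert the covering of $A'$ by few cosets of a subgroup $H$ into a bound on $d(U_A,U_H)$. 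The snag there is that $A'$ need only be a $K^{-O(1)}$-fraction of $A$, so the residual set must be absorbed --- by an amplification (tensor-power) argument or by iteration --- before the conclusion passes from $A'$ to $A$; it is partly to avoid this that \cite{GGMT1}, \cite{GGMT2} argue entropically throughout.
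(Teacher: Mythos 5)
There is nothing in the paper to compare your argument against: this proposition is not proved here at all, but imported verbatim as Theorem 1.3 of \cite{GGMT2} and cited as such --- indeed the paper never even uses it, since the black box actually invoked is the small-distance variant Proposition \ref{Entropic Marton conjecture for small values} from \cite{GMT}, specialised in Corollary \ref{Corollary with X=Y}. For the purposes of this paper, a citation is the intended ``proof'', and that is what you should give as well unless you genuinely intend to reprove the main theorem of \cite{GGMT2}.

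Judged as a standalone proof attempt, your outline correctly names the GGMT strategy (basic properties of the entropic Ruzsa distance, minimisation of a $\tau$-functional, the fibring identity, an endgame rigidity statement forcing coset structure), but it stops exactly at the crux: the construction of the improving pair, the verification that vanishing of the relevant conditional mutual informations forces the law of the minimiser to be uniform on a coset, and the bookkeeping that produces the factor $m^3$ are all described rather than carried out, so there is a genuine gap at the decisive step. Moreover, the shape you describe --- a two-variable $\tau[X;Y]$ with doubled variables $X_1+X_2$, $Y_1+Y_2$ conditioned on linear functionals --- is essentially the argument of \cite{GGMT1} for the $2$-torsion case; the bounded-torsion case of \cite{GGMT2} is organised instead around a multidistance of a whole tuple of variables and a $\tau$-minimisation over such tuples, and the cubic dependence on $m$ emerges from that machinery, so your claim that choosing $\eta \asymp m^{-3}$ and ``iterating'' yields $Cm^3$ is asserted rather than derived and would have to be re-done in that framework. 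Your alternative route via the combinatorial statement also runs backwards relative to \cite{GGMT2} (there the combinatorial theorem is deduced from the entropic one), and, as you note yourself, the Balog--Szemer\'edi--Gowers loss leaves an absorption step you do not supply.
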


The result that we will use to prepare our black box will be the following slight variant of Proposition \ref{Entropic Marton conjecture}, proved by Green, Manners and Tao \cite{GMT} a few months before \cite{GGMT1} and \cite{GGMT2}. Compared to Proposition \ref{Entropic Marton conjecture}, this variant has the explicit limitation that it only pertains to $\e$ sufficiently small, but as the bound in the conclusion of Proposition \ref{Entropic Marton conjecture}, which we will use only when it is at most $1/10$, involves an multiplicative constant $C$, Proposition \ref{Entropic Marton conjecture} presents for our purposes a very similar limitation. Furthermore, Proposition \ref{Entropic Marton conjecture for small values} pertains to all abelian groups and its bounds are uniform over all abelian groups, so will for us provide a stronger application than Proposition \ref{Entropic Marton conjecture} would. (Finally, it may be noted that the proof of Proposition \ref{Entropic Marton conjecture for small values} is not as involved as that of Proposition \ref{Entropic Marton conjecture}.)

\begin{proposition} [\cite{GMT}, Proposition 1.3] \label{Entropic Marton conjecture for small values} There exists an absolute constant $\g>0$ such that the following holds. If $G$ is an abelian group and $X,Y$ are random variables with finite range taking values in $G$ and satisfying $d(X,Y) \le \g$ then there exists a finite subgroup $H$ of $G$ such that \[d(X, U_H) \le 12 d(X,Y) \text{ and } d(Y, U_H) \le 12 d(X,Y).\] \end{proposition}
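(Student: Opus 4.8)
The plan is to reduce, via the entropic Ruzsa triangle inequality and a rigidity property of uniform distributions on subgroups, to a quantitative (stability) form of the entropic $100\%$-inverse theorem, and to obtain the latter by robustifying the proof of the exact case. For the reduction: by the entropic Ruzsa triangle inequality (\cite{Tao},\cite{GMT}), $d(X,X)\le d(X,Y)+d(Y,X)=2d(X,Y)\le 2\g$ and likewise $d(Y,Y)\le 2\g$, so it suffices to prove the one-variable statement: there are absolute constants $\g_0>0$ and $C_0>0$ such that if $d(X,X)\le\g_0$ then $d(X,U_H)\le C_0\,d(X,X)$ for some finite subgroup $H$ of $G$. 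Granting this, apply it to $X$ and to $Y$ to obtain finite subgroups $H_1,H_2$ with $d(X,U_{H_1}),d(Y,U_{H_2})\le 2C_0\,d(X,Y)$, so that by the triangle inequality and the symmetry of $d$ we get $d(U_{H_1},U_{H_2})\le (4C_0+1)d(X,Y)\le (4C_0+1)\g$. Now $U_{H_1}+U_{H_2}$ is uniform on the subgroup $H_1+H_2$ (every element of which has exactly $|H_1\cap H_2|$ representations as a sum), so $d(U_{H_1},U_{H_2})=\log|H_1+H_2|-\tfrac12\log(|H_1||H_2|)$, which equals $0$ when $H_1=H_2$ and is at least $\tfrac12\log 2$ when $H_1\neq H_2$, since $|H_1+H_2|^2\ge 2|H_1||H_2|$ whenever $H_1\neq H_2$. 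Thus if $\g$ is small enough that $(4C_0+1)\g<\tfrac12\log 2$, we are forced to have $H_1=H_2=:H$, and then $d(X,U_H),d(Y,U_H)\le 2C_0\,d(X,Y)$; arranging $C_0\le 6$ gives the constant $12$.

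\emph{The one-variable statement.} When $d(X,X)=0$, for independent copies $X_1,X_2$ we have $\H(X_1+X_2)=\H(X_1)$, and since $\H(X_1+X_2\mid X_2)=\H(X_1)$ always, $X_1+X_2$ is independent of $X_2$; hence the translate $x_2+\mathrm{supp}(X)$ does not depend on $x_2\in\mathrm{supp}(X)$, so $\mathrm{supp}(X)$ is a coset of a finite subgroup $H$ on which $X$ is uniform, and $d(X,U_H)=0$. To robustify, put $\d=d(X,X)=\H(X_1+X_2)-\H(X)$; the identity $\H(X_1+X_2\mid X_2)=\H(X)$ yields $I(X_1+X_2;X_2)=\d$, and adding a third independent copy $X_3$ and using the chain rule together with $X_3\perp(X_1,X_2)$ yields $I(X_1+X_2;X_2+X_3)\le\d$. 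Thus $X_2$ (and $X_2+X_3$) carries almost no information about $X_1+X_2$, so the conditional distribution of $X_1+X_2$ given $X_2=x_2$ — which is the translate by $x_2$ of the distribution of $X$ — is, for all but an $O(\d)$-fraction of $x_2$ (weighted by $X$), close to the unconditional distribution of $X_1+X_2$; in other words $X$ is approximately invariant under translation by typical elements of its support. From this one extracts a finite subgroup $H$ (morally the group of near-periods of that common distribution) such that $\mathrm{supp}(X)$ is close to a coset of $H$, $\H(X)$ is within $O(\d)$ of $\log|H|$, and therefore $d(X,U_H)=O(\d)$. As a consistency check, a uniform distribution on a nontrivial one-dimensional coset progression has $d(\cdot,\cdot)=\Theta(1)$, so only honest finite subgroups arise once $\g_0$ is small; we also record the easy bound $d(X,X+X')\le\tfrac32\,d(X,X)$, immediate from the submodularity inequality $\H(A+B+C)+\H(A)\le\H(A+B)+\H(A+C)$ for independent $A,B,C$.

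\emph{Main obstacle.} The hard step is this extraction with clean, loss-free bounds: the exact argument rests on genuine independence and exact entropy identities, and the robust version must convert ``$\d$ of mutual information'' into ``$O(\d)$ of coset-deviation'' without the loss that a crude information-to-total-variation passage would incur — a direct appeal to Pinsker's inequality yields only $O(\sqrt\d)$ — and must promote the approximate near-periods to an honest subgroup while keeping the entropy deficit $\log|H|-\H(X)$ of size $O(\d)$. I expect this is exactly why the hypothesis is restricted to $d(X,Y)\le\g$: a first, lossy pass gives only the qualitative conclusion that $d(X,X)\le\g_0$ implies $d(X,U_H)=o(1)$ for some finite subgroup $H$, and then re-running the argument with $X$ already known to be $o(1)$-close to $U_H$ — so that its support, level sets and conditional distributions are all nearly those of $U_H$ — self-improves $o(1)$ to $O(\d)$ with an absolute constant; careful bookkeeping through the reduction (keeping $C_0\le 6$) then produces the bound $12$.
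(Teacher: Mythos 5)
First, note that this proposition is not proved in the paper at all: it is imported verbatim as a black box from Green--Manners--Tao (\cite{GMT}, Proposition 1.3), so what you are attempting is a from-scratch proof of a substantial external theorem, not a reconstruction of an argument the paper contains.

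Your peripheral steps are correct: the reduction to a one-variable statement via the entropic Ruzsa triangle inequality ($d(X,X)\le 2d(X,Y)$), the rigidity computation showing $d(U_{H_1},U_{H_2})\ge\tfrac12\log 2$ for distinct finite subgroups $H_1\neq H_2$ (since $|H_1+H_2|^2\ge 2|H_1||H_2|$), the identity $I(X_1+X_2;X_2)=d(X,X)$, and the exact case $d(X,X)=0$. But the heart of the proposition --- the one-variable claim that $d(X,X)\le\g_0$ forces $d(X,U_H)\le C_0\, d(X,X)$ with a \emph{linear} dependence and an absolute constant --- is exactly the content of the cited GMT result, and your proposal does not prove it. The passage from ``$I(X_1+X_2;X_2)\le\d$'' to ``there is a finite subgroup $H$ with entropy deficit $O(\d)$'' is stated only as a goal (``one extracts\dots'', ``I expect\dots''), and you yourself identify the two obstructions without resolving them: Pinsker-type arguments lose a square root, and the set of approximate periods of a distribution is not a subgroup, so promoting near-invariance to an honest $H$ while keeping $\log|H|-\H(X)=O(\d)$ requires a genuine argument (in \cite{GMT} this is several pages of work with repeated use of conditioning and submodularity, not a one-line bootstrap). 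The proposed self-improvement step is likewise only a plan: knowing $d(X,U_H)=o(1)$ does not by itself give the control of supports and conditional distributions you invoke, and no mechanism is given that upgrades $o(1)$ to $O(\d)$. Finally, even granting the one-variable statement, your constant bookkeeping needs $C_0\le 6$, which is asserted rather than obtained. So the proposal is a reasonable reduction plus a research programme for the hard step, not a proof; for the purposes of this paper the correct move is simply to cite \cite{GMT} as the author does.
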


By taking $X=Y$ to be of the type $U_A$ for some non-empty finite subset $A$ of $G$, we obtain a connection between $\Delta \H(A)$ and $d(U_A,U_H)$ for some finite subgroup $H$ of $G$, which is the black box that we will use to obtain our main result, Theorem \ref{Main theorem}.

\begin{corollary} \label{Corollary with X=Y}
There exists an absolute constant $\g>0$ such that the following holds. If $G$ is an abelian group and $A$ is a non-empty finite subset of $G$ satisfying $\Delta \H(A) \le \g$, then \[d(U_A, U_H) \le 12 \Delta \H(A)\] for some finite subgroup $H$ of $G$.
\end{corollary}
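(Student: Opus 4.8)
The plan is to deduce Corollary~\ref{Corollary with X=Y} directly from Proposition~\ref{Entropic Marton conjecture for small values} by specialising to the case $X = Y = U_A$, where $A$ is our given non-empty finite subset of $G$. First I would take the constant $\g > 0$ to be the same absolute constant furnished by Proposition~\ref{Entropic Marton conjecture for small values}. The key observation is that with this choice of $X$ and $Y$, the entropic Ruzsa distance unwinds to
\[
d(U_A, U_A) = \H(U_A' + U_A'') - \frac{\H(U_A') + \H(U_A'')}{2} = \H(U_A + U_A) - \H(U_A) = \Delta \H(A),
\]
where $U_A'$ and $U_A''$ are independent copies of $U_A$ (as is our convention throughout); here I used that entropy depends only on the distribution, so $\H(U_A') = \H(U_A'') = \H(U_A)$, and that the distribution of $U_A' + U_A''$ is by definition the distribution of $U_A + U_A$.

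Next, the hypothesis $\Delta \H(A) \le \g$ is then exactly the hypothesis $d(X,Y) \le \g$ needed to invoke Proposition~\ref{Entropic Marton conjecture for small values} with $X = Y = U_A$. Applying it produces a finite subgroup $H$ of $G$ with
\[
d(U_A, U_H) \le 12\, d(U_A, U_A) = 12\, \Delta \H(A),
\]
which is precisely the desired conclusion. (The second inequality $d(Y, U_H) \le 12 d(X,Y)$ delivered by the proposition is the same statement here since $X = Y$, so it gives no extra information but also costs nothing.) Note that $\Delta\H(A) \ge 0$ by Lemma~\ref{independent addition increases entropy}, so the quantity on the right-hand side is well-defined and nonnegative, though this is not strictly needed for the argument.

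There is essentially no obstacle here: the statement is a formal specialisation, and the only thing to be careful about is the bookkeeping in identifying $d(U_A, U_A)$ with $\Delta\H(A)$, which amounts to recalling that the two copies of $U_A$ in both $\Delta\H(A)$ and in the definition of $d$ are taken to be independent. I would present the proof in three lines: state the choice of $\g$, record the identity $d(U_A, U_A) = \Delta\H(A)$, and apply Proposition~\ref{Entropic Marton conjecture for small values}.
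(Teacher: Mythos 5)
Your proposal is correct and is exactly the paper's route: the corollary is obtained by specialising Proposition~\ref{Entropic Marton conjecture for small values} to $X=Y=U_A$, using that $d(U_A,U_A)=\H(U_A+U_A)-\H(U_A)=\Delta\H(A)$ since the copies in the definition of $d$ are taken independent. Your bookkeeping of the identity and the choice of $\g$ matches what the paper intends.
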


\section{Obtaining a small symmetric difference with a coset} \label{Section: Deduction of the main result from the entropic Marton conjecture}

In this section we prove Theorem \ref{Main theorem}. The proof will involve three stages. We will first show that if $\e$ is small and $A \subset G$ is a (non-empty, finite) distributional $\e$-approximate group then $\Delta \H(A)$ is small. The black box, Corollary \ref{Corollary with X=Y}, then allows us to deduce that $d(U_A, U_H)$ is small for some finite subgroup $H$ of $G$. As we discussed previously, having $d(U_A, U_H)$ small (or even zero) does not guarantee that $A$ has small symmetric difference with $H$. But, as we will show in the third stage, Lemma \ref{bound on A Delta H}, this is necessarily the case with some coset of $H$.

\begin{lemma} \label{bound on Delta H} Let $\e \in (0,\exp(-2))$, let $G$ be an abelian group, and let $A \subset G$ be a non-empty distributional $\e$-approximate group. Then \[\Delta \H(A) \le 2\e \log (\e^{-1}|A|).\] \end{lemma}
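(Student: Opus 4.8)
The plan is to bound $\Delta \H(A) = \H(U_A + U_A) - \H(U_A)$ directly from the distributional $\e$-approximate group hypothesis. Write $S = U_A + U_A$ for the (independent) sum, with mass function $p(z) = |\{(a,b) \in A^2 : a+b = z\}|/|A|^2$. By hypothesis there is a set $U$ with $|U| = |A|$ such that $\P(S \in U) \ge 1 - \e$. I would split the entropy of $S$ according to the indicator event $E = \mathbf{1}[S \in U]$: since conditioning and adding a binary variable changes entropy by at most $\log 2 \le 1$, we have $\H(S) \le \H(S \mid E) + \H(E) \le \P(E=1)\,\H(S \mid S \in U) + \P(E=0)\,\H(S \mid S \notin U) + 1$. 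On the event $S \in U$ the conditional range has size at most $|U| = |A|$, so $\H(S \mid S \in U) \le \log |A|$ by \eqref{bound on entropy in terms of range}. On the event $S \notin U$ the conditional range has size at most $|A+A| \le |A|^2$, so $\H(S \mid S \notin U) \le 2 \log |A|$. Putting $\delta = \P(S \notin U) \le \e$ this gives $\H(S) \le \log|A| + \delta \log|A| + 1$, hence $\Delta \H(A) = \H(S) - \log|A| \le \delta \log|A| + 1$. This is the right shape but the additive $+1$ is too crude; the real work is to replace the constant coming from $\H(E)$ by something proportional to $\e \log(\e^{-1})$.

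To get the sharp bound I would instead use the concavity/grouping identity for entropy more carefully: $\H(S) = \H(S, E) = \H(E) + \P(E=1)\,\H(S \mid E=1) + \P(E=0)\,\H(S \mid E=0)$, and bound the binary entropy by $\H(E) = h(\delta) \le \delta \log(\delta^{-1}) + \delta \log e \le \delta \log(e/\delta)$ (using $h(\delta) = \delta\log(1/\delta) + (1-\delta)\log(1/(1-\delta))$ and $(1-\delta)\log(1/(1-\delta)) \le \delta \log e$). Combining with the two range bounds $\H(S \mid E=1) \le \log|A|$ and $\H(S\mid E=0) \le 2\log|A|$ yields
\[
\Delta \H(A) \le \delta \log(e/\delta) + \delta \log|A| = \delta \log(e|A|/\delta).
\]
Now $\delta \mapsto \delta \log(e|A|/\delta)$ is increasing in $\delta$ on the relevant range (its derivative is $\log(|A|/\delta) \ge 0$ since $\delta \le \e < 1 \le |A|$), so we may replace $\delta$ by $\e$ to get $\Delta\H(A) \le \e \log(e|A|/\e) = \e(\log(\e^{-1}|A|) + 1) \le 2\e \log(\e^{-1}|A|)$, where the last step uses $\log(\e^{-1}|A|) \ge \log(\e^{-1}) > 2 > 1$ because $\e < \exp(-2)$.

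The only genuine subtlety — the part to be careful about rather than hard — is the bound $\H(S \mid E = 1) \le \log |A|$: it requires that conditioning on $S \in U$ restricts $S$ to a range of size at most $|U| = |A|$, which is immediate, and likewise that the overall range of $S$ has size at most $|A|^2$ so that $\H(S \mid E = 0) \le 2\log|A|$. Everything else is the standard chain rule $\H(S) = \H(E) + \H(S\mid E)$ together with the binary-entropy estimate $h(\delta) \le \delta\log(e/\delta)$ and monotonicity in $\delta$. I do not anticipate a substantive obstacle; the lemma is a clean exercise in splitting an entropy along a high-probability event, and the factor $2$ and the form $\e\log(\e^{-1}|A|)$ fall out precisely from the bookkeeping above once the hypothesis $\e < \exp(-2)$ is used to absorb the additive constant.
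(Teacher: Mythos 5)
Your argument is correct and is essentially the paper's own proof: the chain rule $\H(S)=\H(E)+\H(S\mid E)$ with $E=\mathbf{1}[S\in U]$ is exactly the grouping decomposition the paper writes out by partitioning the range of $U_A+U_A$ into $U$ and its complement $V$, followed by the same range bounds $\log|A|$ and $2\log|A|$ and a binary-entropy estimate absorbed using $\e<\exp(-2)$. Your handling of the binary entropy via $h(\delta)\le\delta\log(e/\delta)$ and monotonicity in $\delta$ is a minor cosmetic variation of the paper's bookkeeping and yields the same (in fact slightly better) bound.
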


\begin{proof}

Let $X$ be the random sum $X=a+b$ with $(a,b) \in A^2$ chosen uniformly at random. The range of $X$ has size at most $|A|^2$, so we can partition it into two disjoint subsets $U$,$V$ of $G$ satisfying the four conditions \[|U| = |A|\text{, } |V| \le |A|^2\text{, }\P(X \in U) \ge 1-\e \text{, }\P(X \in U \cup V) = 1.\] We decompose \begin{align*}\H(X) = & - \sum_{x \in U} \P(X=x) \log \P(X=x) - \sum_{x \in V} \P(X=x) \log \P(X=x)\\
= & - \P(X \in U) \log \P(X \in U) - \P(X \in U) \sum_{x \in U} \P(X=x|X \in U) \log \P(X=x|X \in U)\\
& - \P(X \in V) \log \P(X \in V) - \P(X \in V) \sum_{x \in V} \P(X=x|X \in V) \log \P(X=x|X \in V)\\
= & \P(X \in U) \H(X|X \in U) + \P(X \in V) \H(X|X \in V) - (p \log p + (1-p) \log (1-p)) \end{align*}
where $p = \P(X \in V) \le \e$. By the upper bounds \[\H(X|X \in U) \le \log |U|, \H(X|X \in V) \le \log |V| \le 2 \log |A|\] coming from \eqref{bound on entropy in terms of range}, as well as
\[(p \log p + (1-p) \log (1-p)) \le - \e \log \e - 2\e \le - 2\e \log \e\]
where the last inequality uses $\e \le \exp(-2)$, we deduce \[ \H(X) \le \log |U| + \e \log |V| - 2\e \log \e.\] Since $|U|=|A|$ and $|V| \le |A|^2$, this concludes the proof. \end{proof}

The next lemma is what will allow us to move from the upper bound on the (entropic) Rusza distance in the conclusion of Corollary \ref{Corollary with X=Y} to the upper bound on the size of the symmetric difference in the conclusion of Theorem \ref{Main theorem}.

\begin{lemma} \label{bound on A Delta H} Let $\a \in [0, 1/10)$, let $G$ be an abelian group, let $A$ be a non-empty finite subset of $G$, and let $H$ be a finite subgroup of $G$. If $d(U_A, U_H) \le \a$ then \[|A \Delta (H + \{x\})| \le 10 \a |H|\] for some $x \in G$.\end{lemma}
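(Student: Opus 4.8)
### Proof proposal

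The plan is to unpack the hypothesis $d(U_A, U_H) \le \a$ into a concrete statement about how the distribution of $U_A + U_H$ is spread out, and then to combine this with the relation between entropies and sizes of sets.

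First I would recall that by definition
\[
d(U_A, U_H) = \H(U_A + U_H) - \tfrac12 \H(U_A) - \tfrac12 \H(U_H) = \H(U_A + U_H) - \tfrac12 \log|A| - \tfrac12 \log|H|.
\]
Write $Z = U_A + U_H$ for the convolution (independent copies). The key observation is that $Z$ is constant on cosets of $H$: more precisely, for each coset $H + \{y\}$ of $H$, the value of $\P(Z = z)$ for $z$ in that coset is constant and equals $\tfrac{1}{|H|}\cdot\tfrac{|A \cap (H+\{y\})|}{|A|}$, because adding $U_H$ smears any point uniformly over its $H$-coset. Hence if we let $c_y = |A \cap (H+\{y\})|/|A|$ be the fraction of $A$ lying in the coset indexed by $y$ (over a set of coset representatives $y$), then the range of $Z$ is a union of full $H$-cosets and
\[
\H(Z) = \log|H| + \H\big((c_y)_y\big),
\]
where $\H((c_y)_y) = -\sum_y c_y \log c_y$ is the entropy of the coset-distribution of $A$. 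Plugging this in gives
\[
\H\big((c_y)_y\big) = \tfrac12\log|A| - \tfrac12\log|H| + d(U_A,U_H) \le \tfrac12\log(|A|/|H|) + \a.
\]

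Next I would extract from this entropy bound that the coset-distribution $(c_y)_y$ is concentrated on a single coset. Let $x$ be a coset representative maximising $c_x$, and set $\beta = 1 - c_x = \sum_{y \ne x} c_y$, the fraction of $A$ outside $H + \{x\}$. A lower bound on $\H((c_y)_y)$ in terms of $\beta$: since $c_x = 1-\beta$ contributes $-(1-\beta)\log(1-\beta) \ge 0$ and the remaining mass $\beta$ is spread over cosets each of weight at most $c_x \le 1$... actually the cleanest bound is the following — among all distributions with a coordinate equal to $1-\beta$, the entropy is at least $-(1-\beta)\log(1-\beta) - \beta\log\beta$ is \emph{not} a lower bound in general (the rest can be spread arbitrarily thin, increasing entropy, but we want a \emph{lower} bound on entropy to contradict the upper bound, so spreading thin hurts us). Here is the fix: I also know $|A| \ge |A\cap(H+\{x\})| = (1-\beta)|A|$ is trivial, but more usefully the number of cosets meeting $A$ is at most... this is where I need a second inequality. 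The natural one is to also use that $Z$ has range of size at most $|A|\cdot|H|$ (being supported on at most $|A|$ cosets counted with multiplicity, in fact at most the number of cosets meeting $A$ times $|H|$), hence the number $N$ of cosets meeting $A$ satisfies $N \le |A|$, and then $\H((c_y)_y) \le \log N \le \log|A|$ — but we already have a better upper bound. Let me instead bound entropy from below using only $\beta$ and $N$: $\H((c_y)_y) \ge H_2(\beta) $ is false; rather, fixing the largest coordinate at $1-\beta$ and having $N-1$ remaining coordinates, entropy is minimised by putting all of $\beta$ on one coordinate, giving $\H((c_y)_y) \ge -(1-\beta)\log(1-\beta) - \beta\log\beta =: H_2(\beta)$, the binary entropy. \emph{This} is a valid lower bound. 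Combining with the upper bound:
\[
H_2(\beta) \le \tfrac12\log(|A|/|H|) + \a.
\]
Hmm — but this does not force $\beta$ small unless $|A|/|H|$ is close to $1$, which is not yet known. So I need first to pin down $|A|/|H|$.

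To control $|A|/|H|$ I would use the symmetric information, i.e. also read off a bound by testing against $H$ from the other side. Applying Lemma \ref{independent addition increases entropy} in the form $\H(Z) = \H(U_A+U_H) \ge \max(\H(U_A),\H(U_H)) = \max(\log|A|,\log|H|)$, together with $\H(Z) = \tfrac12\log|A|+\tfrac12\log|H|+\a$, yields
\[
\max(\log|A|,\log|H|) \le \tfrac12\log|A|+\tfrac12\log|H|+\a,
\]
i.e. $|\log|A| - \log|H|| \le 2\a$. So $|A|/|H| \in [e^{-2\a}, e^{2\a}]$, hence $\tfrac12\log(|A|/|H|) \le \a$, and the displayed bound becomes
\[
H_2(\beta) \le 2\a.
\]
Since $\a < 1/10$, this gives $H_2(\beta) < 1/5$, which forces $\beta$ to be small: for $\beta \in [0, 1/2]$ one has $H_2(\beta) \ge \beta\log(1/\beta) \ge \beta$ crudely, but we need a factor: a convenient clean inequality is $H_2(\beta) \ge \beta \log(1/\beta)$ and for $H_2(\beta) \le 2\a \le 1/5$ one gets $\beta \le $ something like $2\a/\log(1/\beta)$; bootstrapping once gives $\beta \le C\a$ for an absolute constant. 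I will need to be a little careful to get the constant $10$ but any absolute linear bound $\beta \le c\a$ suffices structurally; I expect to verify $\beta \le 3\a$ or so comfortably from $H_2(\beta)\le 2\a$.

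Finally, I translate back: $\beta|A| = |A \setminus (H+\{x\})|$ and $(1-\beta)|A| = |A \cap (H+\{x\})|$, so
\[
|A \Delta (H+\{x\})| = |A \setminus (H+\{x\})| + |(H+\{x\})\setminus A| = \beta|A| + \big(|H| - (1-\beta)|A|\big).
\]
Using $|A| \le e^{2\a}|H| \le (1+3\a)|H|$ (valid for $\a<1/10$) and $|A| \ge e^{-2\a}|H| \ge (1-2\a)|H|$, the second term is $|H| - (1-\beta)|A| \le |H| - (1-\beta)(1-2\a)|H| \le (\beta + 2\a)|H|$, and the first term is $\beta|A| \le (1+3\a)\beta|H|$. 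Adding and using $\beta \le 3\a$ and $\a < 1/10$, the total is comfortably at most $10\a|H|$, as claimed.

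The main obstacle, and the only genuinely non-routine point, is the combination of the two entropy inequalities: the upper bound on $\H(Z)$ coming from $d(U_A,U_H)\le\a$ gives an upper bound on the coset-entropy of $A$, but on its own an upper bound on entropy does \emph{not} say the distribution is concentrated — it is the \emph{lower} bound $\H((c_y)_y) \ge H_2(\beta)$ (entropy is at least the binary entropy of the complementary mass of the mode) that does the work, and crucially it must be paired with the size comparison $|\log|A|-\log|H||\le 2\a$ (which itself comes from Lemma \ref{independent addition increases entropy}) to make the right-hand side $O(\a)$ rather than $O(\log(|A|/|H|))$. Once both are in hand, everything else is bookkeeping with binary entropy and the triangle-inequality-style decomposition of the symmetric difference.
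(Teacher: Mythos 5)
Your proposal is essentially the paper's argument: the same splitting $\H(U_A+U_H)=\log|H|+\H(E)$ of the entropy into the subgroup part and the coset-distribution part (the paper phrases this as independence of the $H$- and $G/H$-marginals), the same use of Lemma \ref{independent addition increases entropy} to get $|\log|A|-\log|H||\le 2\a$ and hence $\H(E)\le 2\a$, then concentration of the coset distribution on a single coset and bookkeeping; the paper extracts concentration by noting each coset mass $E(b)\le 1/2$ contributes at least $(\log 2)E(b)$, while you use the binary-entropy lower bound plus a bootstrap, and your final accounting of $|A\Delta(H+\{x\})|$ differs only cosmetically from the paper's identity \eqref{two differences}. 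The one loose end in your write-up: since $H_2$ is symmetric about $1/2$, the inequality $H_2(\beta)\le 2\a$ alone does not force $\beta$ small (it also allows $\beta$ near $1$), and your minimisation argument for $\H((c_y)_y)\ge H_2(\beta)$ implicitly assumes $\beta\le 1/2$; this is closed in one line by observing that if $\beta>1/2$ then every coset mass is below $1/2$, so the coset entropy is at least $\log 2>2\a$, a contradiction — after which your constants ($\beta\le 3\a$, total at most $10\a|H|$) do check out.
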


\begin{proof}

For every $x \in G$ we have the identities \begin{align*} |H| - |A| &= |(H + \{x\}) \setminus A| - |A \setminus (H + \{x\})|\\
|A \Delta (H + \{x\})| &= |A \setminus (H + \{x\})| + |(H + \{x\}) \setminus A| \end{align*} and therefore the decomposition \begin{equation} |A \Delta (H + \{x\})| = (|H| - |A|) + 2|A \setminus (H + \{x\})|.\label{two differences} \end{equation} We shall upper bound both parts of the right-hand side of \eqref{two differences} separately for some $x \in G$.

By definition we can rewrite \[d(U_A,U_H) = \frac{\H(U_A+U_H) - \H(U_A)}{2} + \frac{\H(U_A+U_H) - \H(U_H)}{2}.\] By Lemma \ref{independent addition increases entropy} both contributions are nonnegative, so it follows that \begin{equation} \H(U_A+U_H) - \H(U_H) \le 2 \a \label{first inequality} \end{equation} but also that \[ |\H(U_A) - \H(U_H)| \le 2 \a.\] The second inequality simplifies to \[ | \log |A| - \log |H|| \le 2\a\] and hence provides \begin{equation} |H|-|A| \le |H| (1-\exp(-2\a)) \le 2 \a |H|. \label{size difference} \end{equation}

We now move to upper bounding $|A \setminus (H + \{x\})|$ for some suitable $x \in G$. We may without loss of generality identify $G$ with $H \times (G/H)$, so that we may write every $g \in G$ as $g = (h(g),b(g))$ with $h \in H$, $b \in G/H$. We write $E$ for the distribution of the marginal of $U_A+U_H \in H \times (G/H)$ taking values in $G/H$. Because $U_A$ and $U_H$ are independent, we have \[(U_A+U_H)(g) = E(b(g))/|H|\] for every $g \in G$, and hence in particular that the marginals of $U_A+U_H$ taking values in $H$ and in $G/H$ are independent. This provides \[\H(U_A+U_H) = \H(U_H) + \H(E)\] by the equality case of \eqref{subadditivity of entropy} and from \eqref{first inequality} we hence obtain $\H(E) \le 2\a$. In the expression \[\H(E) = - \sum_{b \in G/H} E(b) \log E(b)\] of the entropy of $E$, every $b \in G/H$ satisfying $E(b) \le 1/2$ contributes at least $(\log 2) E(b)$. The total contribution of all such $b$ is therefore at least $T \log 2$, where $T$ is the sum of all $E(b)$ with $E(b) \le 1/2$. As there can be only at most one $b$ satisfying $E(b) > 1/2$, we obtain that there exists an element $b \in G/H$ such that \[\sum_{b’ \neq b} E(b’) \le 2\a/\log 2.\] The probability distribution $E$ of the marginal of $U_A + U_H$ taking values in $G/H$ is also the analogous distribution with $U_A+U_H$ replaced by $U_A$, so we have established that \[|A \setminus (H \times \{b\})| \le (2\a/\log 2) |A| \] and hence that \[|A \setminus (H \times \{b\})| \le (4\a/\log 2) |H| \] using that $\a \le 1/10$. Choosing $x \in G$ with $b(x) = b$ and furthermore using \eqref{two differences}, \eqref{size difference} we obtain \[|A \Delta (H + \{x\})| \le |H|(2\a+(4\a/\log 2)).\] The right-hand side is at most $10 \a |H|$, which finishes the proof. \end{proof}

All is now ready for us to deduce Theorem \ref{Main theorem}.

\begin{proof} [Proof of Theorem \ref{Main theorem}]

Letting $\g$ be as in Corollary \ref{Corollary with X=Y}, we take $\t = \min(1,100\g)$. With this value of $\t$, let $\e,G,A$ be as in Theorem \ref{Main theorem}. Applying Lemma \ref{bound on Delta H} shows that \[\Delta \H (A) \le 2\e \log (\e^{-1}|A|).\] Provided that the right-hand side is at most $\g$, Corollary \ref{Corollary with X=Y} then provides the bound \[d(U_A,U_H) \le 24\e \log (\e^{-1}|A|)\] for some finite subgroup $H$ of $G$. Provided that this newer right-hand side is at most $1/10$, Lemma \ref{bound on A Delta H} finally establishes that \[|A \Delta (H + \{x\})| \le 240 \e \log (\e^{-1}|A|) |H|\] for some $x \in G$. \end{proof}

\end{document}